\newcommand{\mat}[1]{\ensuremath{\mathbf{#1}}}
\renewcommand{\vec}[1]{\ensuremath{\mathbf{#1}}}
\renewcommand{\r}[1]{\ensuremath{\text{rank}({#1})}}
\newcommand{\s}[1]{\ensuremath{\text{span}({#1})}}
\newtheorem{theorem}{Theorem}
\newtheorem{lemma}{Lemma}
\newtheorem{remark}{Remark}
\newtheorem{corollary}{Corollary}
\begin{document}

\IEEEoverridecommandlockouts
\title{The DoF of the K-user Interference Channel with a Cognitive Relay}
\author{
\IEEEauthorblockN{Anas Chaaban and Aydin Sezgin}
\IEEEauthorblockA{Chair of Communication Systems\\
RUB, 44780 Bochum, Germany\\
Email: {anas.chaaban@rub.de, aydin.sezgin@rub.de}}
}

\maketitle

\begin{abstract}
It was shown recently that the $2$-user interference channel with a cognitive relay (IC-CR) has full degrees of freedom (DoF) almost surely, that is, $2$ DoF. The purpose of this work is to check whether the DoF of the $K$-user IC-CR, consisting of $K$ user pairs and a cognitive relay, follow as a straight forward extension of the $2$-user case. As it turns out, this is not the case. The $K$-user IC-CR is shown to have $2K/3$ DoF if $K>2$ for the when the channel is time varying, achievable using interference alignment. Thus, while the basic $K$-user IC with time varying channel coefficients has $1/2$ DoF per user for all $K$, the $K$-user IC-CR with varying channels has 1 DoF per user if $K=2$ and $2/3$ DoF per user if $K>2$. 
Furthermore, the DoF region of the $3$-user IC-CR with constant channels is characterized using interference neutralization, and a new upper bound on the sum-capacity of the $2$-user IC-CR is given. 
\end{abstract}

\section{Introduction}
One of the approaches for approximating the capacity of interference networks is finding the multiplexing gain. The multiplexing gain, also known as the capacity pre-log or degrees of freedom (DoF), characterizes the capacity of the network at an asymptotically high signal-to-noise ratio. Recently, there has been an increasing interest in characterizing the DoF of interference networks, e.g, the $K$-user IC \cite{CadambeJafar_KUserIC} and the X Channel \cite{JafarShamai_XChannel,MaddahAliMotahariKhandani_XChannel}.

Besides the IC and the X channel, relaying setups have also been studied from DoF point of view. For instance, \cite{CadambeJafar_ImpactOfRelays} studies the impact of relays on wireless networks and shows that causal relays do not increase the DoF of the network. Non-causal relays, on the other hand, can increase the DoF. In \cite{SridharanVishwanathJafarShamai}, achievable rate regions and upper bounds for the 2-user IC with a cognitive relay (IC-CR) were given, and 
it was shown the interference channel with a cognitive relay has full DoF, i.e., 2 DoF. The cognitive IC has also been studied in \cite{RiniTuninettiDevroyeGoldsmith,RiniTuninettiDevroye_ISIT, RiniTuninettiDevroye_ITW} where capacity results for some cases were given, in addition to new upper and lower bounds.

The question we try to answer in this paper is: How is the behavior of the DoF of the IC-CR with $K$ users? A straight forward extension of the results of \cite{SridharanVishwanathJafarShamai} suggest that the $K$-user IC-CR has $K$ DoF. The goal of this paper is the characterization of the DoF for general $K$. Namely, we consider the effect of a cognitive relay on the DoF of the $K$-user IC. We study the $K$-user Gaussian IC-CR, and obtain the DoF of this channel under time varying channel coefficients assumption.

It turns out that the case with $K>2$ users does not follow as a straight forward extension of the 2-user case. We show that while the sum-rate of the 2-user Gaussian IC-CR scales as $2\log(P)$ as the transmit power $P\to\infty$, the $K>2$ user case scales as $\frac{2K}{3}\log(P)$. In other words, the 2-user case does not follow the same law as the $K>2$ user case. This DoF is shown to be achievable using interference alignment as in a $K$-user $2\times1$ MISO IC \cite{GouJafar}. Thus we give a characterization of the DoF of the $K$-user Gaussian IC-CR with time varying channel coefficients. It turns out that the per user DoF of the $K$-user Gaussian IC-CR drop from 1 to $2/3$ as we go from the $K=2$ to $K>2$. This is in contrast to the $K$-user IC, where the per-user DoF is $1/2$ for all $K\geq2$. We also consider the constant channel case, for which we obtain the DoF region of the 3-user Gaussian IC-CR.

As a result, in contrast to \cite{CadambeJafar_ImpactOfRelays}, where it was shown that causal relays can not increase the DoF of the wireless network, a cognitive relay can increase the DoF of the $K$-user IC from $K/2$ to $2K/3$ with $K>2$. Moreover, the results of this paper give an example where cognition/relaying can help in increasing the DoF of a wireless network.

The rest of the paper is organized as follows. In section \ref{Model}, we give the general model of the $K$-user IC-CR. In section \ref{TimeVarying}, we consider the time varying IC-CR and characterize its DoF, and in section \ref{Constant} we consider the IC-CR with constant channel coefficients, where we give a new sum-rate upper bound for the 2-user case and characterize the DoF region of the 3-user case. Finally, we conclude in section \ref{Conc}.

\section{System Model}
\label{Model}
The K-user Gaussian interference channel with a cognitive relay (IC-CR) is shown in Figure \ref{KuserICCR}. It consists of $K$ transmit-receive pairs and a cognitive relay, each with one antenna. For $k\in\{1,\dots,K\}$, source $k$ has a message $m_k\in\mathcal{M}_k\triangleq\{1,\dots,2^{nR_k}\}$ to be sent to destination $k$ over $n$ channel uses. The messages $m_k$ are independent, uniformely distributed over the messages sets, and are made available non-causally at the relay. At each time instant $(i)$, the output of the channel can be represented as follows
\begin{eqnarray*}
Y_{k}(i)=\sum_{j=1}^Kh_{jk}(i)X_{j}(i)+h_{rk}(i)X_{r}(i)+Z_{k}(i),
\end{eqnarray*}
where $X_r,X_k\in\mathbb{R}$, $k=1,\dots,K$, are the channel inputs and $Y_k\in\mathbb{R}$ is the channel output, $Z_k$ is an independent and identically distributed (i.i.d.) noise with zero mean and unit variance $Z_k\sim\mathcal{N}(0,1)$, and $h_{jk}(i)$ and $h_{rk}(i)$ represent time varying channel gains from source $j$ and the relay to destination $k$, respectively. The channels are assumed to be known apriori at all nodes, and are i.i.d. and drawn from a continuous distribution. The IC-CR with constant channels is defined in the same way as above, with the exception that $h_{jk}(i_1)=h_{jk}(i_2)$ and $h_{rk}(i_1)=h_{rk}(i_2)$ for all $i_1,i_2\in\mathbb{N}$.

The inputs satisfy the following power constraint
\begin{equation}
\label{PowerConstraint}
\mathbb{E}[X_{j}^2]\leq P,\quad \forall j\in\{1,\dots,K,r\}.
\end{equation}
The transmitters and the relay use encoding functions 
to map the messages to codewords $X_k^n=(X_k(1),\dots,X_k(n))$ and $X_r^n=(X_r(1)\dots,X_r(n))$, respectively. The receivers want to decode their desired messages from their received signals $Y_k^n$
which induces an error probability. 
A rate tuple $(R_1,\dots,R_K)$ is said to be achievable if the error probability can be made arbitrarily small by increasing the code length $n$. The closure of the set of all achievable rate tuples defines the capacity region $\mathcal{C}$.


An achievable sum-rate is defined as $R_\Sigma=\sum_{k=1}^KR_k$ with $(R_1,\dots,R_k)\in\mathcal{C}$ and the sum-capacity $C_\Sigma$ is the maximum sum-rate. The sum DoF is defined as $$d_\Sigma=\sum_{k=1}^Kd_i=\lim_{P\to\infty}\frac{C_\Sigma(P)}{\frac{1}{2}\log(P)}.$$
The DoF region $\mathcal{D}$ is defined as in \cite{CadambeJafar_KUserIC}.

\begin{figure}
\centering
\includegraphics[width=0.9\columnwidth]{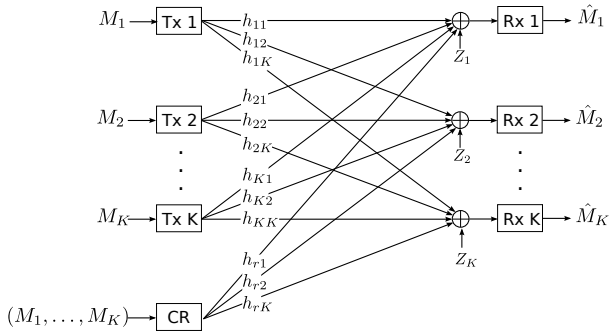}
\caption{The K-user Gaussian interference channel with a cognitive relay (CR) system model.}
\label{KuserICCR}
\end{figure}

\section{The IC-CR with Time Varying Channel Coefficients}
\label{TimeVarying}
In this section, we study the DoF of the $K$-user IC-CR. We state the main result in the following theorem, and describe it in more details afterwards.
\begin{theorem}
The DoF of the $K$-user IC-CR with time varying channel coefficients is given by
\begin{align*}
d_\Sigma=\left\{\begin{array}{cc} 2 & \text{if $K=2$}\\
\frac{2K}{3} & \text{if $K>2$}\end{array}\right.
\end{align*}
\end{theorem}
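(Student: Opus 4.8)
The plan is to establish the claimed value in each regime by a matching converse and achievability argument, treating $K=2$ and $K>2$ separately; only $K>2$ carries real content. For $K=2$ the converse is trivial (each receiver has a single antenna, so $d_1\le 1$, $d_2\le 1$, hence $d_\Sigma\le 2$), and achievability follows by interference neutralization at the cognitive relay: since the relay knows both messages, let it transmit $X_r=-\frac{h_{12}}{h_{r2}}X_1-\frac{h_{21}}{h_{r1}}X_2$. Substituting into the outputs, the $X_2$-component of $Y_1$ and the $X_1$-component of $Y_2$ vanish almost surely, so each pair sees an interference-free point-to-point link and attains $1$ DoF; the relay power constraint is met after rescaling all inputs by an a.s.\ finite constant, which does not affect the DoF. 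This recovers the known $2$-DoF result of \cite{SridharanVishwanathJafarShamai}.

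For the converse when $K>2$, I would prove the sub-network bound
\begin{align*}
d_i+d_j+d_k\le 2\qquad\text{for every distinct }i,j,k\in\{1,\dots,K\},
\end{align*}
and then average it over all $\binom{K}{3}$ triples: since each index lies in $\binom{K-1}{2}$ triples, this gives $\binom{K-1}{2}\,d_\Sigma\le 2\binom{K}{3}$, i.e.\ $d_\Sigma\le\frac{2K}{3}$. To prove the sub-network bound inside the full channel, I would hand to a genie-aided compound receiver the outputs $Y_i^n,Y_j^n$ together with all non-participating messages $m_l$, $l\notin\{i,j,k\}$ (these cost no DoF since $H(m_i,m_j,m_k)=H(m_i,m_j,m_k\mid m_{\text{rest}})$), which lets it cancel every interferer except $X_i,X_j,X_k,X_r$. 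This receiver decodes $m_i$ from $Y_i^n$ and $m_j$ from $Y_j^n$ (it has strictly more information than receivers $i$ and $j$), reconstructs $X_i^n,X_j^n$, and after cleaning is left with two observations that are generic linear combinations of $X_k^n$ and $X_r^n$ corrupted only by noise; inverting the $2\times2$ channel matrix (invertible a.s.\ over time-varying channels) recovers $X_k^n$ (and $X_r^n$) up to bounded noise, from which $m_k$ is decoded. Expanding $I(m_i,m_j,m_k;Y_i^n,Y_j^n\mid m_{\text{rest}})$ as $I(\cdot;Y_j^n\mid m_{\text{rest}})+I(\cdot;Y_i^n\mid Y_j^n,m_{\text{rest}})$ and bounding each term by $h(Y_j^n\mid m_{\text{rest}})-h(Z_j^n)$ and $h(Y_i^n\mid Y_j^n,m_{\text{rest}})-h(Z_i^n)$ respectively, Fano's inequality then yields $n(R_i+R_j+R_k)\le n\log P+o(n\log P)$, hence $d_i+d_j+d_k\le 2$.

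For achievability when $K>2$, I would use the relay as a shared second transmit antenna: each transmitter sends $X_k$ and the relay sends $X_r=\sum_{k=1}^K X_r^{(k)}$, where $X_r^{(k)}$ carries message $k$. Receiver $k$ then observes $\sum_{j}\big(h_{jk}X_j+h_{rk}X_r^{(j)}\big)+Z_k$, which is a $K$-user $2\times1$ MISO interference channel in which the effective channel from user $j$ to receiver $k$ is $(h_{jk},h_{rk})$. The only departure from the generic MISO IC is that, for a fixed receiver, the relay coefficient $h_{rk}$ is shared by all transmitters; but the channel is time varying, so the relevant diagonal channel matrices over an extended block remain generic and the aligned and desired subspaces remain in general position, and the asymptotic interference-alignment scheme of \cite{GouJafar} applies and achieves $\frac{2K}{3}$ DoF. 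The relay's sum power exceeds $P$ only by a constant factor, absorbed by rescaling. Combined with the converse, $d_\Sigma=\frac{2K}{3}$ for $K>2$.

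I expect the main obstacle to be the converse sub-network bound rather than the achievability: specifically, one must argue carefully that the genie-aided receiver's reconstruction of $X_k^n$ up to noise suffices to conclude decodability of $m_k$ (or organize the entropy chain so that this final step is not needed), and one must verify that the interference-alignment argument of \cite{GouJafar} is not disturbed by the structured, shared-relay second column of the induced MISO channel. The averaging over triples and the power-rescaling arguments are routine bookkeeping.
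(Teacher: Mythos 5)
Your overall architecture coincides with the paper's: a triple-wise bound $d_i+d_j+d_k\le 2$ averaged over all $\binom{K}{3}$ triples for the converse, the relay-as-shared-second-antenna reduction to a $K$-user $2\times 1$ MISO IC with the alignment scheme of Gou--Jafar for achievability, and neutralization for $K=2$ (that part is fine). However, the heart of the theorem is precisely the proof of the triple bound, and your argument for it has a genuine gap. You invoke Fano for $(m_i,m_j,m_k)$ at a compound receiver observing $(Y_i^n,Y_j^n,m_{\text{rest}})$, which requires that $m_k$ be decodable from these observations. Recovering $X_k^n$ ``up to bounded noise'' does not give this: the code for user $k$ is only guaranteed decodable from $Y_k^n$, and the best the compound receiver can synthesize is $h_{kk}X_k+h_{rk}X_r+aZ_i+bZ_j$, whose noise has variance $a^2+b^2$ that generically differs from (and may exceed) the variance of $Z_k$, so no degradedness argument transfers decodability. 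Your own hedge (``or organize the entropy chain so that this final step is not needed'') is exactly where the missing idea lies: the paper never decodes $m_3$ at an enhanced receiver in this way. Instead it hands receiver 3 a hand-tailored noise side information $\tilde{Z}^n$ (a specific linear combination of $Z_1,Z_2,Z_3$) chosen so that $Y_2^n$ can be reconstructed exactly from $(Y_3^n,Y_1^n,m_1,m_2,\tilde{Z}^n)$; the bound then follows purely entropically by cancelling $h(Y_3^n\mid Y_1^n,m_1,m_2,\tilde{Z}^n)$ against the negative term $-h(Y_2^n\mid Y_1^n,m_1,m_2)$ up to $o(\log P)$ (Jacobian terms from rescaling by $\beta(i)/\alpha(i)$), plus showing $I(m_3;\tilde{Z}^n\mid\cdot)$ costs no DoF. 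Without this noise-matching construction (or an equivalent device), your converse chain is not justified.

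A secondary gap is in the $K>2$ achievability: you assert that the shared relay coefficient $h_{rk}$ in the second column of every effective channel vector to receiver $k$ leaves the Gou--Jafar subspaces ``in general position,'' but that dependency is exactly the non-generic feature that must be checked, since the alignment proof in the MISO/SIMO IC relies on the direct channel being independent of the matrices used to build the precoders. The paper handles this explicitly: it passes to the reciprocal SIMO IC, and splits the direct channel as $H_{kk}=\widehat{H}_{kk}+\widetilde{H}_{kk}$, where $\widehat{H}_{kk}$ contains only the coefficients $h_{kk}(i)$, which are independent of all cross channels and of the precoders, so the rank condition $\text{rank}(U_kH_{kk}V_k)=d_k$ holds almost surely. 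Your proposal needs an argument of this type rather than an appeal to genericity of a channel that is, by construction, not generic.
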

The proof of this theorem is given in the following subsections. We consider the 2-user case first, and then the $K$-user case, and derive upper bounds on the DoF. Then we provide the achievability of these upper bounds.

\subsection{A Sum-capacity Upper Bound for the 2-User Gaussian IC-CR}
\label{2UserSumRateBound}
The 2-User Gaussian IC-CR with constant channel coefficients was considered in \cite{SridharanVishwanathJafarShamai}, where achievable rate regions, upper bounds, and the DoF region were given. The same DoF upper bound as in \cite{SridharanVishwanathJafarShamai} holds for the time varying case. That is
\begin{align}
\label{2UserDoFUpperBoundTimeVarying}
d_1+d_2\leq2.
\end{align}

\subsection{DoF Upper Bound for the $K$-User Gaussian IC-CR with $K\geq3$:}
\label{3User}
We first consider the case $K=3$. The DoF upper bound (\ref{2UserDoFUpperBoundTimeVarying}) yields
\begin{equation}
\label{DoFExtension}
d_1+d_2+d_3\leq3,
\end{equation}
when extended to the $3$-user case. However, as we show next, this straight forward extension is not tight since the DoF of the 3-user IC-CR is upper bounded by 2. In the following lemma, we give a DoF upper bound for the 3-user IC-CR.
\begin{lemma}
\label{3UserICCRTheorem}
The DoF of the 3-user IC-CR is upper bounded by
\begin{equation*}
d_1+d_2+d_3\leq2.
\end{equation*}
\end{lemma}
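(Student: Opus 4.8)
The plan is to bound $d_1+d_2+d_3$ by a genie-aided argument that reduces the $3$-user IC-CR to a channel whose DoF we can control, exploiting the fact that the relay, being cognitive, knows \emph{all} messages $m_1,m_2,m_3$ non-causally. First I would form a super-receiver by giving receiver~$1$ the side information $(m_3, Y_2^n)$ as a genie. Since $m_1,m_2$ must be decodable from $Y_1^n$ (for the first, by assumption; the second via a Fano-type step after also handing over $m_1$, or directly by giving $m_2$ as an additional genie), and $m_3$ is given for free, the enhanced receiver $\tilde Y_1^n = (Y_1^n, Y_2^n, m_3)$ can decode $(m_1,m_2,m_3)$. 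Hence $n(R_1+R_2+R_3) \le I(m_1,m_2,m_3 ; Y_1^n, Y_2^n \mid m_3) + n\epsilon_n \le I(m_1,m_2 ; Y_1^n,Y_2^n \mid m_3) + n\epsilon_n$.

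The key observation is that once $m_3$ is conditioned on, the relay input $X_r^n$ is a deterministic function of $(m_1,m_2,m_3)$ and $X_3^n$ is a deterministic function of $m_3$, so conditioned on $m_3$ the only ``extra'' randomness injected by the relay into $Y_1,Y_2$ is what it computes from $m_1,m_2$ — which the enhanced receiver is already trying to decode. Concretely, I would write, for each time~$i$,
\begin{align*}
Y_1(i) &= h_{11}(i)X_1(i) + h_{21}(i)X_2(i) + h_{r1}(i)X_r(i) + \big(h_{31}(i)X_3(i) + Z_1(i)\big),\\
Y_2(i) &= h_{12}(i)X_2(i) + h_{22}(i)X_2(i) + h_{r2}(i)X_r(i) + \big(h_{32}(i)X_3(i) + Z_2(i)\big),
\end{align*}
and note that, conditioned on $m_3$, the pair $(Y_1(i),Y_2(i))$ is (up to known-interference and noise terms) a function of the three ``effective transmit signals'' $X_1(i), X_2(i), X_r(i)$ observed through a $2\times 3$ channel matrix. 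Thus $(m_1,m_2)\to (X_1^n,X_2^n,X_r^n)\to (Y_1^n,Y_2^n)\mid m_3$ is a two-user MAC-like channel with two independent messages seen at two antennas; a standard MISO/MAC DoF bound gives $I(m_1,m_2;Y_1^n,Y_2^n\mid m_3)\le n\cdot\tfrac12\log P\cdot \min(2,3) + o(n\log P) = n\log P + o(n\log P)$. Dividing by $\tfrac12\log P$ and letting $P\to\infty$ (and $n\to\infty$, $\epsilon_n\to 0$) yields $d_1+d_2+d_3\le 2$.

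The main obstacle I anticipate is making the ``two receive antennas, three effective transmitters'' reduction rigorous as a genuine DoF upper bound: one must argue that correlated inputs (the relay is correlated with \emph{both} transmitters, since it knows both messages) cannot beat the $\min(\#\text{rx},\#\text{tx})=2$ DoF ceiling, and that the known interference $h_{3k}(i)X_3(i)$ and the noise contribute only $o(\log P)$. The clean way is to further enhance: drop the $X_3$ term by giving $X_3^n$ (equivalently $m_3$) as we did, bound $h(Y_1^n,Y_2^n\mid m_3) - h(Y_1^n,Y_2^n \mid m_1,m_2,m_3)$, and observe the conditional-on-all-messages entropy is just $h(Z_1^n,Z_2^n)=n\log(2\pi e)$, a constant, while $h(Y_1^n,Y_2^n\mid m_3)\le \sum_i h(Y_1(i),Y_2(i))\le n\big(\log(2\pi e) + \log P + o(\log P)\big)$ by the per-symbol power constraint $\mathbb{E}[X_j^2]\le P$ and the fact that a $2$-dimensional Gaussian vector whose coordinates have bounded second moments $\lesssim P$ has differential entropy at most $\log P + O(1)$. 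That immediately gives $n(R_1+R_2+R_3)\le n\log P + o(n\log P)$, hence $d_\Sigma\le 2$, with no need to invoke MISO IC results at all — this entropy-counting route is what I would actually write, as it sidesteps the correlated-input subtlety entirely.
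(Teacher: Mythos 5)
There is a genuine gap, and it sits in the very first step of your genie argument. The inequality $n(R_1+R_2+R_3)\le I(m_1,m_2,m_3;Y_1^n,Y_2^n\mid m_3)+n\epsilon_n$ is not valid: since $m_3$ is conditioned on, $I(m_1,m_2,m_3;Y_1^n,Y_2^n\mid m_3)=I(m_1,m_2;Y_1^n,Y_2^n\mid m_3)$, which contains no term that accounts for $nR_3$. Fano applied to the enhanced receiver $(Y_1^n,Y_2^n,m_3)$ only certifies decodability of $m_1$ and $m_2$; the rate of a message handed over by the genie cannot be charged to that receiver. So your chain, carried out correctly, proves only $d_1+d_2\le 2$, which is trivially true ($d_k\le 1$ for each $k$) and says nothing about $d_1+d_2+d_3$. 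Nor can you repair it by simply keeping $Y_3^n$ in the super-receiver: bounding $I(m_1,m_2,m_3;Y_1^n,Y_2^n,Y_3^n)$ by entropy counting over three receive dimensions gives $\tfrac{3}{2}\log P$ per channel use, i.e., the useless bound $d_\Sigma\le 3$. The whole difficulty of the lemma is precisely that a naive ``count the receive antennas'' or ``give one message away'' argument lands on $3$, not $2$.

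The missing idea is the one the paper's proof is built on: because the relay has a single antenna, the observations of receivers $2$ and $3$ become statistically redundant once receiver $1$'s signal and the messages $m_1,m_2$ are supplied. Concretely, the paper gives $(Y_1^n,m_1)$ to Rx~$2$ and $(Y_1^n,m_1,m_2,\tilde Z^n)$ to Rx~$3$, where $\tilde Z^n$ is a specific linear combination of the noises chosen so that, after stripping $X_1^n,X_2^n$ and using $Y_1^n$ to eliminate $X_r^n$, one can reconstruct $Y_2^n$ from $(Y_3^n,Y_1^n,m_1,m_2,\tilde Z^n)$. This makes the positive term $h(Y_3^n\mid Y_1^n,m_1,m_2,\tilde Z^n)$ cancel against the negative term $-h(Y_2^n\mid Y_1^n,m_1,m_2)$ up to $o(\log P)$, and only two $\tfrac12\log P$ terms survive, yielding $d_1+d_2+d_3\le 2$ while still charging all three rates. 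Your proposal never exploits the single-antenna (rank-one) nature of the relay's contribution, which is exactly what separates the true bound $2$ from the trivial bound $3$; as written, the argument does not establish the lemma.
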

\begin{proof}
See Appendix \ref{3UserICCRProof}.
\end{proof}

The $3$-user Gaussian IC-CR DoF upper bound can be used to obtain the DoF upper bound for the $K$-user Gaussian IC-CR with $K\geq3$ stated in the following theorem.
\begin{theorem}
The DoF of the $K$-user IC-CR, $K\geq3$, is upper bounded as follows 
\begin{equation}
\label{KUserDoFUpperBoundTimeVarying}
d_\Sigma\leq\frac{2K}{3}.
\end{equation}
\end{theorem}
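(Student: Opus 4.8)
The plan is to bootstrap the three-user bound of Lemma~\ref{3UserICCRTheorem} to all $K\geq3$ by a genie-aided reduction followed by a simple counting argument. The key observation is that any three of the $K$ user pairs, after the other transmitters' contributions are removed, see exactly a three-user IC-CR, so Lemma~\ref{3UserICCRTheorem} constrains every triple.

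First I would fix an arbitrary triple $\{i,j,k\}\subseteq\{1,\dots,K\}$ and show $d_i+d_j+d_k\leq2$. Give receivers $i,j,k$ the messages $\{m_\ell:\ell\notin\{i,j,k\}\}$ as genie side information; this cannot shrink the capacity region, hence cannot decrease the DoF. Each of these receivers can then reconstruct the codewords $X_\ell^n$ of the other transmitters (deterministic functions of the $m_\ell$) and subtract $\sum_{\ell\notin\{i,j,k\}}h_{\ell m}(i)X_\ell^n$ from its received sequence, $m\in\{i,j,k\}$. Conditioned on the now-common side information $\{m_\ell:\ell\notin\{i,j,k\}\}$, what remains is precisely a three-user Gaussian IC-CR: the direct and cross links among $\{i,j,k\}$ and the relay links $h_{r m}$ are unchanged, and although the relay input $X_r^n$ is in principle a function of all $K$ messages, for each frozen value of the side information it is a function of $(m_i,m_j,m_k)$ only, which is exactly the cognitive-relay structure of the three-user model. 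Lemma~\ref{3UserICCRTheorem} then yields $d_i+d_j+d_k\leq2$.

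Next I would add up these $\binom{K}{3}$ inequalities. Each term $d_i$ appears in $\binom{K-1}{2}$ of the triples, so $\binom{K-1}{2}\,d_\Sigma\leq2\binom{K}{3}$, which simplifies to $d_\Sigma\leq\frac{2K}{3}$. (When $3\mid K$ one may instead partition $\{1,\dots,K\}$ into $K/3$ disjoint triples and sum the corresponding bounds directly; the averaging version above avoids any divisibility assumption and covers every $K\geq3$ at once.)

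The step I expect to require the most care is the reduction in the second paragraph: one must verify that conditioning on the genie messages genuinely produces a bona fide instance of the three-user IC-CR to which Lemma~\ref{3UserICCRTheorem} applies — in particular that the relay's extra knowledge of $\{m_\ell:\ell\notin\{i,j,k\}\}$ is harmless once those messages are also revealed to the three active receivers, and that the marginal channel statistics of the sub-channel (time-varying, i.i.d., drawn from a continuous distribution) match the hypotheses of the lemma. Once that is settled, the remainder is elementary counting.
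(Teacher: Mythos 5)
Your proposal is correct and follows essentially the same route as the paper: invoke Lemma~\ref{3UserICCRTheorem} for every triple of users and average the $\binom{K}{3}$ inequalities, each $d_i$ appearing in $\binom{K-1}{2}$ of them, to get $d_\Sigma\leq 2K/3$. The only difference is that you spell out the genie-aided reduction showing each triple inside the $K$-user channel is bounded by the 3-user result, a step the paper leaves implicit.
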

\begin{proof}
Using Lemma \ref{3UserICCRTheorem}, we have: $d_j+d_k+d_l\leq 2$, for all distinct $j,k,l\in\{1,\dots,K\}$. Adding all such inequalities, we obtain ${{K-1}\choose{2}}d_\Sigma\leq2{{K}\choose{3}}$, and the result follows.
\end{proof}

\subsection{Achievability of the $K$-User IC-CR DoF}
Consider the following achievable scheme in a $K$-user Gaussian IC-CR. At time instant $i$, the message $m_k$, $k\in\{1,\dots,K\}$, is mapped to a vector $\vec{x}_k(i)=[x_{k}^{[1]}(i),\ x_{k}^{[2]}(i)]^T$, the first component of which is sent from Tx 1 and the second component is sent from the relay. The overall relay signal is $x_r(i)=\sum_{k=1}^Kx_k^{[2]}(i)$ and the received signals at receiver $j$ can be written as:
\begin{align}
\label{MISOIO}
y_{j}(i)&=\sum_{k=1}^K\vec{h}_{kj}^T(i)\vec{x}_{k}(i)+z_{j}(i),\\
\label{HMISO}
\vec{h}_{kj}(i)&=[h_{kj}(i),\ h_{rj}(i)]^T. 
\end{align}
Therefore, we can model the IC-CR with this scheme as a $K$-user $2\times1$ MISO IC with time varying channel coefficients. Since the relay sends the sum of $K$ signals, we guarantee that the power constraint at each node of the IC-CR is satisfied by defining the power constraint of the resulting MISO channel to be $P/K$ at each node. Notice that this power scaling does not reduce the achievable DoF.

It was shown in \cite{GouJafar} that using interference alignment in a $K$-user $2\times1$ MISO interference channel with time varying channel coefficients, 2 DoF are achievable if $K=2$, and $\frac{2K}{3}$ DoF are achievable if $K>2$. It is important to note that by the reciprocity of interference alignment \cite{GomadamCadambeJafar_GC2008}, the same DoF is achievable in the SIMO IC (with the same physical channels). 

Here, we use the same scheme as in \cite{GouJafar} for our setup, i.e., we make use of reciprocity. We consider the reciprocal $1\times2$ $K$-user SIMO IC with the physical channels given by the $2\times1$ MISO IC interpretation of the IC-CR given in \eqref{MISOIO}. In this SIMO IC, the channel from transmitter $j$ to receiver $k$ is $\vec{h}_{kj}(i)$. Notice here the special structure of the SIMO channel vectors: the second component of $\vec{h}_{jj}(i)$ is the same as $\vec{h}_{kj}(i)$ (see \eqref{HMISO})

Now, as in \cite{GouJafar}, we consider $\mu_n$ symbol extensions of the channel. This makes the $1\times2$ SIMO IC and extended $\mu_n\times2\mu_n$ SIMO IC, where the channel matrix from Tx $j$ to Rx $k$ is $2\mu_n\times\mu_n$ and has a block diagonal structure
\begin{align}
H_{kj}=\left[\begin{array}{ccc} \vec{h}_{kj}(1)&\vec{0}_{2\times1}&\dots\\
\vec{0}_{2\times1}&\vec{h}_{kj}(2)&\dots\\
\vdots&\vdots&\ddots\end{array}\right],
\end{align}
where $\vec{0}_{2\times1}$ is the all-zero vector of length 2. Notice that $\vec{h}_{kj}(i_2)$ and $\vec{h}_{kj}(i_2)$ are independent. User $j\in\mathcal{T}_1=\{1,2,3\}$ sends a data vector $\vec{w}_j=[\vec{x}_j^T(1),\ \vec{x}_j^T(2),\dots]^T$ of length\footnote{$\mu_n$ is chosen so that all the relevant quantities are integer.} $\frac{2}{3}\mu_n$ using a pre-coding matrix $V_j$ with dimension $\mu_n\times\frac{2}{3}\mu_n$. User $j\in\mathcal{T}_2=\{4,\dots,K\}$ sends a data vector $\vec{w}_j$ of length $\left(\frac{2}{3}-\epsilon_n\right)\mu_n$ using a pre-coding matrix $V_j$ with dimension $\mu_n\times(\frac{2}{3}-\epsilon_n)\mu_n$, where $\epsilon_n\to0$ as $n\to\infty$. Thus, Tx $j$ sends 
\begin{align}
X_j=V_j\vec{w}_j.
\end{align}
As in \cite{GouJafar}, we choose $V_1=V_2=V_3$ and $V_4=V_5=\dots=V_K$. The main idea is of alignment is to find pre-coding matrices $V_k$ and post-coding matrices $U_k$ such that 
\begin{align}
\label{IALI}
\r{U_kH_{kk}V_k}&=d_k\\
\label{IA}
U_kH_{kj}V_j&=0\quad\forall k\neq j,
\end{align}
where $d_k=\frac{2}{3}\mu_n$ for $k\in\mathcal{T}_1$ and $d_k=(\frac{2}{3}-\epsilon_n)\mu_n$ for $k\in\mathcal{T}_2$. Here, $d_k$ denotes the dimension of the subspace spanned by the desired signal at Rx $k$. Denote by $\bar{d}_k$ the dimension of the subspace spanned by all the interfering signals arriving at Rx $k$. Since user $k$ needs to achieve $d_k$ DoF, then the remaining dimensions of the overall $2\mu_n$-dimenstional receive space to be occupied by interference should have $\bar{d}_k=\frac{4}{3}\mu_n$ for $k\in\mathcal{T}_1$ and $\bar{d}_k=(\frac{4}{3}+\epsilon_n)\mu_n$ for $k\in\mathcal{T}_2$. For example, at Rx 1 and $K$, we need to make sure that the following holds, respectively, 
\begin{align*}
\s{[H_{12}V_2,\ H_{13}V_3,\dots,H_{1K}V_K]}&=\frac{4}{3}\mu_n\\
\s{[H_{K1}V_1,\ H_{K2}V_2,\dots,H_{K(K-1)}V_{K-1}]}&=(\frac{4}{3}+\epsilon_n)\mu_n.\end{align*}
This is guaranteed by using the same construction of $V_k$ as in \cite{GouJafar}, where $V_k$ is given as a function of all $H_{kj}$, $j\neq k$. By choosing $U_k$ to be the null space of the subspace spanned by the interference, we satisfy \eqref{IA}.

Now for the general SIMO IC, the construction of $V_k$ given in \cite{GouJafar} also satisfies \eqref{IALI} since their design of $V_k$ is independent of the direct channels $H_{kk}$ which are generated randomly and independently of all other channels. In our case, we should examine this more carefully, since we have some dependency in the channels given by $$H_{kk}(2m,m)=H_{kj}(2m,m),\ m=1,2,\dots$$ where $H_{kj}(a,b)$ is the component in the a-th row and b-th column of $H_{kj}$. The design of $V_k$ is not completely independent of $H_{kk}$ in our case. However, let us write $H_{kk}$ as follows
\begin{align}
H_{kk}=\widehat{H}_{kk}+\widetilde{H}_{kk}
\end{align}
where
\begin{align}
\widehat{H}_{kk}&=\left[\begin{array}{ccc} h_{kk}(1)&0&\dots\\
0&0&\dots\\
0&h_{kk}(2)&\dots\\
0&0&\dots\\
\vdots&\vdots&\ddots\end{array}\right].
\end{align}
Then, the construction of $V_k$ is clearly independent of $\widehat{H}_{kk}$ whose components are independent of all other channel matrices. Moreover, $\widehat{H}_{kk}$ has full rank. Therefore, $\r{U_k\widehat{H}_{kk}V_k}=d_k$ almost surely and hence condition \eqref{IALI} is satisfied. This achieves $3\left(\frac{2}{3}\right)\mu_n+(K-3)\left(\frac{2}{3}-\epsilon_n\right)\mu_n$ DoF almost surely which approaches $\frac{2K}{3}$ as $n\to\infty$. As a consequence (due to reciprocity), by using $V_j$ and $U_j$ as post-coding and pre-coding matrices at Rx $j$ and Tx $j$ in the original MISO IC, respectively, we achieve $2K/3$ DoF. Thus the DoF upper bounds (\ref{2UserDoFUpperBoundTimeVarying}) and (\ref{KUserDoFUpperBoundTimeVarying}) are achievable using interference alignment.


\section{The IC-CR with constant channel coefficients}
\label{Constant}
In this section, we focus on the IC-CR with constant channel coefficients. We give a new sum-rate upper bound for the 2-user IC-CR. The DoF upper bounds in section \ref{TimeVarying} are general and still hold in this case. However, what differs is that achievable scheme. In what follows, we give an upper bound on the sum-rate of the 2-user case, and we characterize the DoF region of the 3-user case.

\subsection{The 2-User Gaussian IC-CR with constant channel coefficients}
\label{2User}

\begin{theorem}
\label{2UserICCRTheoremConstant}
The sum-rate of the 2-user Gaussian IC-CR with constant channel coefficients is upper bounded by
\begin{align*}
&R_1+R_2\leq\max_{\mat{A}\succeq0}\left\{I(X_{1},X_{2},X_{r};Y_{1})+I(X_{2},X_{r};Y_{2}|Y_{1},X_{1})\right\}\nonumber
\end{align*}
where $(X_1,X_2,X_r)$ are jointly Gaussian with covariance matrix
\begin{equation*}
\mat{A}=\left(\begin{array}{ccc}
P_1 	& 0 	& \rho_1\sqrt{P_1P_r}\\
0   	& P_2	& \rho_2\sqrt{P_2P_r}\\
\rho_1\sqrt{P_1P_r} & \rho_2\sqrt{P_2P_r} & P_r\end{array}\right),
\end{equation*}
and $P_j\leq P\ \forall j\in\{1,2,r\}$. This bounds gives the following DoF upper bound
\begin{eqnarray}
\label{2UserDoF}
d_1+d_2=\left\{\begin{array}{cll} 1 & \text{if } &\hspace{-0.5cm}h_{11}h_{r2}-h_{12}h_{r1}=0\\
& \text{or } &\hspace{-0.5cm}h_{22}h_{r1}-h_{21}h_{r2}=0\\
2 & \text{otherwise}
\end{array}
\right.
\end{eqnarray}
\end{theorem}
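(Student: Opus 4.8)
My plan is a genie-aided single-letter converse for the sum-rate, followed by a high-SNR reading of that bound for the DoF part. Starting from Fano's inequality, $n(R_1+R_2)\le I(m_1;Y_1^n)+I(m_2;Y_2^n)+n\epsilon_n$ with $\epsilon_n\to0$, I would give receiver $2$ the genie $(m_1,Y_1^n)$. Since $m_1\perp m_2$, $I(m_2;Y_2^n)\le I(m_2;m_1,Y_1^n,Y_2^n)=I(m_2;Y_1^n|m_1)+I(m_2;Y_2^n|Y_1^n,m_1)$, and the chain rule gives $I(m_1;Y_1^n)+I(m_2;Y_1^n|m_1)=I(m_1,m_2;Y_1^n)$. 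Because $(X_1^n,X_2^n,X_r^n)$ is a deterministic function of $(m_1,m_2)$, data processing yields $I(m_1,m_2;Y_1^n)\le I(X_1^n,X_2^n,X_r^n;Y_1^n)=h(Y_1^n)-h(Z_1^n)$; and since $X_2^n$ is a function of $m_2$ while $X_r^n$ is a function of $(m_1,m_2)$, conditioning on $m_1$ makes $(X_2^n,X_r^n)$ a function of $m_2$, so $I(m_2;Y_2^n|Y_1^n,m_1)\le I(X_2^n,X_r^n;Y_2^n|Y_1^n,m_1)=h(Y_2^n|Y_1^n,m_1)-h(Z_2^n)$ (the last equality because $m_1$ fixes $X_1^n$, so all three inputs are then known). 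This collapses to $n(R_1+R_2)-n\epsilon_n\le h(Y_1^n)+h(Y_2^n|Y_1^n,m_1)-h(Z_1^n)-h(Z_2^n)$.

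Next I would single-letterize: $h(Y_1^n)\le\sum_i h(Y_1(i))$, and expanding $h(Y_2^n|Y_1^n,m_1)$ by the chain rule and weakening the conditioning in the $i$-th term down to $(Y_1(i),X_1(i))$ — a function of $(Y_1^n,m_1)$ — gives $h(Y_2^n|Y_1^n,m_1)\le\sum_i h(Y_2(i)|Y_1(i),X_1(i))$. Introducing a uniform time-sharing variable and using concavity, this becomes $R_1+R_2\le I(X_1,X_2,X_r;Y_1)+I(X_2,X_r;Y_2|X_1,Y_1)$ maximized over inputs with $\mathbb{E}[X_j^2]\le P$. Since $X_1$ and $X_2$ are formed from the independent messages $m_1$ and $m_2$, one may take them zero-mean with $\mathbb{E}[X_1X_2]=0$, while $X_r$ may correlate with both; among all joint laws obeying these second-moment constraints, a jointly Gaussian triple with covariance $\mat{A}$ simultaneously maximizes $h(Y_1)$ and the conditional differential entropy $h(Y_2|X_1,Y_1)$, which gives exactly the stated bound.

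For the DoF, the inequality $d_1+d_2\le2$ is already (\ref{2UserDoFUpperBoundTimeVarying}). In the sum-rate bound, $I(X_1,X_2,X_r;Y_1)$ contributes at most one DoF. For the conditional term, write $Y_1-h_{11}X_1=h_{21}X_2+h_{r1}X_r+Z_1$ and $Y_2-h_{12}X_1=h_{22}X_2+h_{r2}X_r+Z_2$: if $h_{22}h_{r1}-h_{21}h_{r2}=0$, the noiseless part of the latter is a fixed scalar multiple of the noiseless part of the former, so conditioning on $(X_1,Y_1)$ leaves only noise-level uncertainty in $Y_2$ and $I(X_2,X_r;Y_2|X_1,Y_1)=O(1)$, forcing $d_1+d_2\le1$; the case $h_{11}h_{r2}-h_{12}h_{r1}=0$ follows from the same argument with receivers $1$ and $2$ interchanged. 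In the remaining case I would exhibit a matching scheme by interference neutralization: after an SNR-independent power rescaling of all inputs, the relay transmits $X_r=-\tfrac{h_{12}}{h_{r2}}X_1-\tfrac{h_{21}}{h_{r1}}X_2$, so receiver $1$ sees $\big(h_{11}-\tfrac{h_{r1}h_{12}}{h_{r2}}\big)X_1+Z_1$ and receiver $2$ sees $\big(h_{22}-\tfrac{h_{r2}h_{21}}{h_{r1}}\big)X_2+Z_2$; both effective gains are nonzero exactly when the two determinants are nonzero, so each user gets $1$ DoF. Since $d_1+d_2\ge1$ always (silence one user), the degenerate cases equal $1$ and the generic case equals $2$.

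The main obstacle I expect is the conditional maximum-entropy step — proving that the single Gaussian $\mat{A}$ is optimal for $h(Y_2|X_1,Y_1)$, where the naive ``Gaussian is best'' argument does not apply to a conditional entropy (one instead bounds it by the entropy of the linear MMSE error and then by a Gaussian of equal variance) — together with setting up the genie correctly, i.e.\ handing receiver $2$ the message $m_1$ rather than just $Y_1^n$, so that $X_r^n$ becomes a function of $m_2$ alone and all the data-processing inequalities above are valid.
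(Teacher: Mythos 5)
Your proposal is correct and follows essentially the same route as the paper: the paper's entire proof is the one-line recipe you execute (give $(Y_1^n,m_1)$ as side information to receiver 2, then Fano, chain rule, data processing, single-letterization, and the conditional maximum-entropy/LMMSE step leading to the jointly Gaussian bound with covariance $\mat{A}$). The DoF statement \eqref{2UserDoF} and its interference-neutralization achievability are simply cited from \cite[Theorem 4]{SridharanVishwanathJafarShamai} in the paper, whereas you prove them directly — correctly noting that the collapse under $h_{11}h_{r2}-h_{12}h_{r1}=0$ requires the mirror-image genie bound (side information $(Y_2^n,m_2)$ to receiver 1), since the bound as stated in the theorem only forces the collapse when $h_{22}h_{r1}-h_{21}h_{r2}=0$.
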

The statement of this theorem is obtained by giving $(Y_1^n,m_1)$  as side information to receiver 2 and using classical information theoretic approaches. 
In \cite[Theorem 4]{SridharanVishwanathJafarShamai}, it was shown that $d_1+d_2$ satisfies \eqref{2UserDoF}, and that this upper bound is indeed achievable using interference neutralization \cite{MohajerDiggaviFragouliTse}. The sum-rate upper bound in Theorem \ref{2UserICCRTheoremConstant} combines the two DoF cases in one expression. We notice a collapse of the DoF to 1 under the special conditions in (\ref{2UserDoF}). With random channel realizations, the condition under which $d_1+d_2=1$ constitutes a set of measure zero. Thus the 2-user IC-CR with constant channel coefficients has 2 DoF almost surely achievable using interference neutralization.

\subsection{The 3-User Gaussian IC-CR with constant channel coefficients}
The sum DoF upper bound in Lemma \ref{3UserICCRTheorem} still holds in this case. Thus $$d_1+d_2+d_3\leq2.$$
In the following theorem, we give the DoF region of the 3-user IC-CR with constant channel coefficients.
\begin{theorem}
The DoF region $\mathcal{D}$ of the 3-user Gaussian IC-CR is given by
\begin{align}
\label{3UserICCRRegion}
\mathcal{D}=\left\{(d_1,d_2,d_3):\begin{array}{l}d_k\leq1, \forall k\in\{1,2,3\}\\
d_1+d_2+d_3\leq2\end{array}
\right\}.
\end{align}
\end{theorem}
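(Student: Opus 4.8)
The plan is to establish the region in (\ref{3UserICCRRegion}) by proving the converse and the achievability separately. For the converse, the bound $d_1+d_2+d_3\leq2$ is exactly Lemma \ref{3UserICCRTheorem}, which holds for constant channels as a special case, so nothing new is needed there. The single-user bounds $d_k\leq1$ follow from the trivial cut-set / point-to-point argument: even with the relay fully cooperating with source $k$ and in the absence of all interference, the channel to receiver $k$ is a point-to-point Gaussian channel (with at most two transmit antennas' worth of power $2P$), whose capacity pre-log is $1$. Hence every achievable $(d_1,d_2,d_3)$ lies in the claimed polytope.

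For achievability I would show that every corner point of the polytope is attainable and invoke time-sharing (which preserves the DoF region since the region is closed and convex). The relevant corner points are the permutations of $(1,1,0)$ and the point $(2/3,2/3,2/3)$; every point of $\mathcal{D}$ is a convex combination of these. The point $(1,1,0)$ (and its permutations) is achieved by simply switching off one user and applying the 2-user IC-CR result, Theorem \ref{2UserICCRTheoremConstant}, which gives $d_1+d_2=2$ almost surely via interference neutralization \cite{MohajerDiggaviFragouliTse}. The symmetric point $(2/3,2/3,2/3)$, i.e. $d_\Sigma=2$, is the one requiring real work.

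For the symmetric point I would use interference neutralization as in the 2-user constant-channel case: map each message $m_k$ to a 2-dimensional vector $\vec{x}_k=[x_k^{[1]},x_k^{[2]}]^T$ with $x_k^{[1]}$ sent by source $k$ and $x_k^{[2]}$ contributed to the relay signal $x_r=\sum_k x_k^{[2]}$, exactly as in the MISO interpretation (\ref{MISOIO})–(\ref{HMISO}). Receiver $k$ then sees $y_k=\sum_j \vec{h}_{kj}^T\vec{x}_j+z_k$ with $\vec{h}_{kj}=[h_{kj},h_{rj}]^T$. Each transmit vector $\vec{x}_j$ carries a scalar symbol $s_j$ through a beamforming direction $\vec{v}_j\in\mathbb{R}^2$, i.e. $\vec{x}_j=\vec{v}_j s_j$. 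The neutralization conditions are $\vec{h}_{kj}^T\vec{v}_j=0$ for all $k\neq j$ — for each $j$ this is two linear equations in the two unknowns of $\vec{v}_j$, namely $\vec{h}_{kj}^T\vec{v}_j=0$ for the two indices $k\neq j$ — generically solvable with a nonzero $\vec{v}_j$, after which one checks $\vec{h}_{jj}^T\vec{v}_j\neq0$ almost surely so the desired symbol arrives with nonzero coefficient. This delivers one interference-free real dimension to each of the three receivers per channel use while respecting the per-node power constraint (rescale the relay contribution by $1/\sqrt{3}$, which does not affect DoF), giving $d_k=1$ each over this scheme — but since the scheme uses the MISO interpretation which carries one symbol per user per channel use and the sum is capped at $d_\Sigma\le 2$ by Lemma \ref{3UserICCRTheorem}, the correct statement is that neutralization with a suitable time-varying on/off schedule, or equivalently the fractional scheme, yields $(2/3,2/3,2/3)$; the cleanest route is to achieve the three asymmetric corners $(1,1,0)$ and time-share.

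The main obstacle is verifying that the neutralization vectors $\vec{v}_j$ genuinely exist with the right nondegeneracy, i.e. that the system of conditions is consistent and that the resulting direct-link coefficients are nonzero for generic channel realizations — this is where the special structure of the channel vectors (the shared relay component $h_{rj}$ across all $\vec{h}_{kj}$ for fixed $j$) must be checked not to force a degeneracy. Once that genericity check goes through, closing the region is routine: convexity and closedness of $\mathcal{D}$ plus the matching converse from Lemma \ref{3UserICCRTheorem} and the $d_k\le1$ bounds finish the proof.
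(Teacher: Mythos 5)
Your converse (Lemma \ref{3UserICCRTheorem} plus the trivial single-user bounds) and the achievability route you finally settle on --- silence two users for the permutations of $(1,0,0)$, silence one user and use 2-user interference neutralization for the permutations of $(1,1,0)$, then time-share --- is exactly the paper's proof, and it is complete as it stands. However, the detour in the middle of your proposal contains two genuine errors that should be removed rather than left as an ``obstacle to be checked.'' First, $(2/3,2/3,2/3)$ is not a corner point of $\mathcal{D}$: only the single constraint $d_1+d_2+d_3\leq 2$ is active there, and it is simply the average of the three vertices $(1,1,0)$, $(1,0,1)$, $(0,1,1)$, so it requires no separate scheme --- the vertices of the polytope are the permutations of $(1,0,0)$ and $(1,1,0)$ (and the origin), and time-sharing among them covers everything.

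Second, the simultaneous 3-user neutralization you sketch is generically \emph{infeasible}, not generically feasible. For a fixed $j$ the two homogeneous conditions $\vec{h}_{kj}^T\vec{v}_j=0$, $k\neq j$, have coefficient rows $(h_{k_1j},\,h_{rj})$ and $(h_{k_2j},\,h_{rj})$ (the relay entry is shared, by \eqref{HMISO}), whose determinant $h_{rj}\left(h_{k_1j}-h_{k_2j}\right)$ is nonzero almost surely; hence the only solution is $\vec{v}_j=\vec{0}$, and the claim ``generically solvable with a nonzero $\vec{v}_j$'' is false. This is as it must be: three interference-free dimensions per channel use would give $d_\Sigma=3$ and contradict Lemma \ref{3UserICCRTheorem}; the paper's Corollary \ref{3UserSpecialCaseTheorem} shows that full neutralization for all three users is possible only under measure-zero channel conditions. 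So the genericity check you defer would fail rather than go through, and the proof must rest on the corner-point/time-sharing argument you name as the ``cleanest route''; for those corners the only nondegeneracy needed is $h_{jj}h_{rk}-h_{jk}h_{rj}\neq 0$ almost surely, exactly as in Theorem \ref{2UserICCRTheoremConstant}.
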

\begin{proof}
We know that $d_1+d_2+d_3\leq2$. Together with the following trivial bounds $$d_k\leq1, \forall k\in\{1,2,3\},$$ it follows that the DoF region is outer bounded by $\mathcal{D}$. Since the corner points of this region, i.e. the points $(1,0,0)$,$(0,1,0)$, and $(0,0,1)$, $(1,1,0)$, $(1,0,1)$, and $(0,1,1)$ are all achievable, the former three corners by keeping two users silent, and the latter three corners by keeping one user silent and using interference neutralization as in the 2-user IC-CR, the whole region is achievable by time sharing, and the statement of the theorem follows.
\end{proof}


\begin{remark}
Interference neutralization can also be used as a DoF achieving scheme for the time varying 2 and 3 user Gaussian IC-CR.
\end{remark}

In some special cases, the 3-user Gaussian IC-CR has 3 DoF. However, these special cases occur under conditions that do not hold almost surely, i.e. constitute a set of measure 0. This is given in the following corollary.
\begin{corollary}
\label{3UserSpecialCaseTheorem}
If the 3-user Gaussian IC-CR satisfies the following conditions,
\begin{align*}
\frac{h_{32}}{h_{31}}=\frac{h_{r2}}{h_{r1}},\quad \frac{h_{23}}{h_{21}}=\frac{h_{r3}}{h_{r1}},\quad \frac{h_{13}}{h_{12}}=\frac{h_{r3}}{h_{r2}},
\end{align*}
and
\begin{align*}
\frac{h_{11}}{h_{12}}\neq\frac{h_{r1}}{h_{r2}},\quad \frac{h_{22}}{h_{21}}\neq\frac{h_{r2}}{h_{r1}},\quad \frac{h_{33}}{h_{31}}\neq\frac{h_{r3}}{h_{r1}},
\end{align*}
then $d_1+d_2+d_3=3$.
\end{corollary}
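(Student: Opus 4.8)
The plan is to show that under the stated conditions, each receiver sees all interfering signals aligned into a single dimension that the relay can null out, thereby achieving one DoF per user simultaneously. The strategy is a zero-forcing / interference-neutralization argument using the cognitive relay's side information, exactly in the spirit of the 2-user neutralization scheme referenced via \cite{SridharanVishwanathJafarShamai,MohajerDiggaviFragouliTse}, but now executed jointly at all three receivers.

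First I would let each source $j$ send $X_j = h_{?} s_j$ (a scaled version of a single symbol $s_j$ carrying $m_j$) and let the relay, which knows all three messages, transmit $X_r = \sum_{j=1}^3 \alpha_j s_j$ for coefficients $\alpha_j$ to be chosen. Then receiver $k$ observes $Y_k = \sum_{j} h_{jk} X_j + h_{rk} X_r + Z_k = (\text{desired term in } s_k) + \sum_{j\neq k}\bigl(h_{jk} c_j + h_{rk}\alpha_j\bigr) s_j + Z_k$. The goal is to pick the relay coefficient $\alpha_j$ and the source scalings so that the coefficient multiplying each interfering $s_j$ at receiver $k$ vanishes for all $k\neq j$ — there are two such constraints per interfering stream (at the two receivers that are not its intended one), but only one relay degree of freedom $\alpha_j$ per stream, so a naive count fails. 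This is exactly where the first family of conditions enters: the proportionality $h_{32}/h_{31}=h_{r2}/h_{r1}$, $h_{23}/h_{21}=h_{r3}/h_{r1}$, $h_{13}/h_{12}=h_{r3}/h_{r2}$ makes the two nulling constraints on each interfering stream \emph{collinear} (one is a scalar multiple of the other), so a single $\alpha_j$ suffices to kill $s_j$ at both unintended receivers. Concretely I would verify, say for $s_1$, that the pair of equations ``coefficient of $s_1$ at Rx 2 $=0$'' and ``coefficient of $s_1$ at Rx 3 $=0$'' are proportional precisely when $h_{12}/h_{13}=h_{r2}/h_{r3}$, which is the third listed condition (and similarly the other two conditions handle $s_2$ and $s_3$ after relabeling the indices of the sources).

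Next I would check that this neutralization does not inadvertently cancel the desired signals: after the relay cancels all cross terms, receiver $k$ is left with $Y_k = \bigl(h_{kk} c_k + h_{rk}\alpha_k\bigr) s_k + Z_k$, and I must show the effective gain $h_{kk} c_k + h_{rk}\alpha_k$ is nonzero (almost surely, i.e.\ generically), so that $s_k$ is decodable and $d_k=1$ is achieved. Tracing through the choice of $\alpha_k$ forced by the nulling conditions, the effective direct gain turns out to be proportional to a $2\times2$ determinant of the form $h_{kk}h_{r\ell}-h_{k\ell}h_{rk}$ for the appropriate $\ell$, i.e.\ it is nonzero exactly when $h_{kk}/h_{k\ell}\neq h_{rk}/h_{r\ell}$ — and these are precisely the three inequalities in the second family of hypotheses in the corollary. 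Hence under both families of conditions, all three users get one interference-free dimension, $d_1=d_2=d_3=1$, so $d_1+d_2+d_3=3$; combined with the trivial $d_k\leq1$ bound this is the exact sum-DoF for that (measure-zero) channel family.

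The main obstacle I anticipate is bookkeeping the six nulling constraints and the induced relay coefficients so that the collinearity is transparent, and in particular keeping the scaling of the source symbols consistent across the three streams — each condition only buys collinearity for one stream, so one must check all three simultaneously without the choice made for $s_1$ conflicting with the choices for $s_2$ and $s_3$ (they do not conflict because each $\alpha_j$ is an independent free parameter). A secondary subtlety is the power constraint: $X_r=\sum_j \alpha_j s_j$ must satisfy $\mathbb E[X_r^2]\le P$, which is handled by the same $P/K$-type power-split argument used in the achievability section of the time-varying case, since constant scaling does not affect the pre-log. With those in hand the argument is just the linear-algebra verification of proportionality, which is routine once the constraint matrix is written down explicitly.
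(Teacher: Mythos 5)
Your proposal is correct and follows essentially the same route as the paper: the relay, knowing all three messages, neutralizes every cross term, with the three proportionality conditions making the two nulling constraints per interfering stream consistent with a single relay coefficient, and the three inequalities guaranteeing that the surviving direct gains (of the form $h_{kk}h_{r\ell}-h_{k\ell}h_{rk}$) are nonzero, which together with the trivial $d_k\leq 1$ bounds gives $d_1+d_2+d_3=3$. The only element of the paper's (brief) proof you do not reproduce is the observation that these same proportionality conditions force $\beta=0$, so the converse argument behind Lemma \ref{3UserICCRTheorem} breaks down — a consistency remark rather than a logical ingredient of the corollary, since your upper bound comes from the single-user constraints.
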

\begin{proof}
See Appendix \ref{3UserSpecialCaseProof}.
\end{proof}

\section{Conclusion}
\label{Conc}
We studied the $K$-user Gaussian interference channel with a cognitive relay. For the 2-user case, we have obtained a new upper bound on the sum-capacity. In the general $K$-user case with time varying channel coefficients, we characterized the DoF. We have shown that while for $K=2$, the setup has 2 DoF, for $K>2$ users the DoF are upper bounded by $2K/3$. Moreover $2K/3$ DoF are achievable using interference alignment when the channels are time varying. We notice that the DoF per user is more compared to that in the $K$-user IC, where we have $1/2$ DoF per user. Thus, a cognitive relay can increase the DoF of the IC. We notice also a decrease in the per-user DoF for the $K$-user case from 1 to 2/3 as we go from $K=2$ to $K>2$. We also considered the case with constant channel coefficients, where we gave the DoF region for the 3-user case and showed that it is achievable using interference neutralization.

\begin{appendices}

\section{Proof of Lemma \ref{3UserICCRTheorem}}
\label{3UserICCRProof}
Let us give $(Y_1^n,m_1)$ and $(Y_1^n,m_1,m_2,\tilde{Z}^n)$ as side information to receivers 2 and 3 respectively, where $\tilde{Z}^n=(\tilde{Z}(1),\dots,\tilde{Z}(n))$ and
\begin{align*}
\tilde{Z}(i)&=Z_{2}(i)-\frac{h_{r2}(i)}{h_{r1}(i)}Z_{1}(i)\\
&\ \ -\left(Z_{3}(i)-Z_{1}(i)\frac{h_{r3}(i)}{h_{r1}(i)}\right)\frac{h_{32}(i)-\frac{h_{31}(i)h_{r2}(i)}{h_{r1}(i)}}{h_{33}(i)-\frac{h_{31}(i)h_{r3}(i)}{h_{r1}(i)}}.
\end{align*}
This random variable $\tilde{Z}$ is used to allow constructing $Y_2^n$ from $Y_3^n$, $Y_1^n$, $X_1^n$, and $X_2^n$ as we shall see next. Then, using Fano's inequality, with $\epsilon_n\to0$ as $n\to\infty$, we write
\begin{align}
&\hspace{-0.5cm}n(R_1+R_2+R_3-3n\epsilon_n)\nonumber\\
&\leq I(m_1;Y_1^n)+I(m_2;Y_2^n,Y_1^n,m_1)\nonumber\\
&\ \ +I(m_3;Y_3^n,Y_1^n,m_1,m_2,\tilde{Z}^n)\\
&= I(m_1;Y_1^n)+I(m_2;Y_1^n|m_1)+I(m_2;Y_2^n|Y_1^n,m_1)\nonumber\\
&\ \ +I(m_3;Y_1^n|m_1,m_2)+I(m_3;\tilde{Z}^n|Y_1^n,m_1,m_2)\nonumber\\
&\ \ +I(m_3;Y_3^n|Y_1^n,m_1,m_2,\tilde{Z}^n)\\
&\leq I(m_1,m_2,m_3;Y_1^n)+I(m_2;Y_2^n|m_1,Y_1^n)\nonumber\\
&\ \ +I(m_3;\tilde{Z}^n|Y_1^n,m_1,m_2)\nonumber\\
&\ \ +I(m_3;Y_3^n|Y_1^n,m_1,m_2,\tilde{Z}^n)\\
\label{3UB1}
&\leq I(m_1,m_2,m_3;Y_1^n)+h(Y_2^n|m_1,Y_1^n)\nonumber\\
&\ \ -h(Y_2^n|m_1,m_2,Y_1^n)+I(m_3;\tilde{Z}^n|Y_1^n,m_1,m_2)\nonumber\\
&\ \ +h(Y_3^n|Y_1^n,m_1,m_2,\tilde{Z}^n)-h(Z_3^n|\tilde{Z}^n).
\end{align}
where we have used the chain rule and the independence of $m_1$, $m_2$ and $m_3$. Consider now the first term in \eqref{3UB1}. This is bounded by
\begin{align}
\label{3UBT1}
I(m_1,m_2,m_3;Y_1^n)\leq n\left(\frac{1}{2}\log(P)+o(\log(P))\right).
\end{align}
Moreover,
\begin{align}
\label{3UBT2}
h(Y_2^n|m_1,Y_1^n)-h(Z_3^n|\tilde{Z}^n)\leq n\left(\frac{1}{2}\log(P)+o(\log(P))\right)
\end{align}
except if $Y_2^n$ is a degraded version of $Y_1^n$ given $m_1$, which is not the case almost surely due to the randomness of the channels. Consider then the fifth term in \eqref{3UB1}, $h(Y_3^n|Y_1^n,m_1,m_2,\tilde{Z}^n)$. This can be bounded as follows
\begin{align*}
h(Y_3^n|Y_1^n,m_1,m_2,\tilde{Z}^n)&\stackrel{(a)}{=}h({Y}_3^n|{Y}_1^n,m_1,m_2,X_1^n,X_2^n,\tilde{Z}^n)\\
&\stackrel{(b)}{\leq}h(\tilde{Y}_3^n|\tilde{Y}_1^n,m_1,m_2,\tilde{Z}^n)\\
&\stackrel{(c)}{=}h\left(\hat{Y}_3^n|\tilde{Y}_1^n,m_1,m_2,\tilde{Z}^n\right)
\end{align*}
where 
\begin{itemize}
\item[$(a)$] follows since $X_1^n$ and $X_2^n$ can be constructed from $m_1$ and $m_2$, \item[$(b)$] follows by using the knowledge of $X_1^n$ and $X_2^n$ to cancel their contribution from $Y_3^n$ and $Y_1^n$, where we defined $\tilde{Y}_{3}(i)\triangleq h_{33}(i)X_{3}(i)+h_{r3}(i)X_{r}(i)+Z_{3}(i)$ and $\tilde{Y}_{1}(i)\triangleq h_{31}(i)X_{3}(i)+h_{r1}(i)X_{r}(i)+Z_{1}(i),$ and we used the fact that conditioning reduces entropy, and
\item[$(c)$] follows by the following operation
\begin{align}
\hat{Y}_{3}(i)&=\tilde{Y}_{3}(i)-\frac{h_{r3}(i)}{h_{r1}(i)}\tilde{Y}_{1}(i)\\
&=\alpha(i)X_{3}(i)+Z_{3}(i)-\frac{h_{r3}(i)}{h_{r1}(i)}Z_{1}(i),
\end{align}
where $\alpha(i)=h_{33}(i)-h_{31}(i)\frac{h_{r3}(i)}{h_{r1}(i)}\neq0$ almost surely.
\end{itemize}
We continue
\begin{align}
&\hspace{-0.8cm}h(Y_3^n|Y_1^n,m_1,m_2,\tilde{Z}^n)\nonumber\\
&\leq h\left(\hat{Y}_3^n|\tilde{Y}_1^n,m_1,m_2,\tilde{Z}^n\right)\\
&\stackrel{(d)}{=}h\left(\bar{Y}_3^n|\tilde{Y}_1^n,m_1,m_2,\tilde{Z}^n\right)-\frac{1}{2}\sum_{i=1}^n\log\left(\frac{\beta^2(i)}{\alpha^2(i)}\right)\\
&\stackrel{(e)}{=}h(Y_2^n|Y_1^n,m_1,m_2,\tilde{Z}^n)-\frac{1}{2}\sum_{i=1}^n\log\left(\frac{\beta^2(i)}{\alpha^2(i)}\right)\\
\label{3UB2}
&\stackrel{(f)}{\leq}h(Y_2^n|Y_1^n,m_1,m_2)-\frac{1}{2}\sum_{i=1}^n\log\left(\frac{\beta^2(i)}{\alpha^2(i)}\right),
\end{align}
where in
\begin{itemize}
\item[$(d)$] we defined $\bar{Y}_{3}(i)\triangleq\frac{\beta(i)}{\alpha(i)}\hat{Y}_{3}(i),$ with $\beta(i)=h_{32}(i)-h_{31}(i)\frac{h_{r2}(i)}{h_{r1}(i)}\neq0$ almost surely, and we used $h(aX)=h(X)+\frac{1}{2}\log(a^2)$ \cite{CoverThomas},
\item[$(e)$] follows by the constructing $Y_2(i)=\bar{Y}_3(i)+\frac{h_{r2}(i)}{h_{r1}(i)}\tilde{Y}_{1}(i)+\tilde{Z}(i)$ and reconstructing $Y_1^n$ from $(\tilde{Y}_1^n,m_1,m_2)$, and
\item[$(f)$] follows since conditioning reduces entropy.
\end{itemize}

As a result, if we consider the third and the fifth term in \eqref{3UB1} together, and use \eqref{3UB2}, we get
\begin{align}
\label{3UBT3}
h(Y_3^n|Y_1^n,m_1,m_2,\tilde{Z}^n)-h(Y_2^n|m_1,m_2,Y_1^n)\leq n\left(o(\log(P))\right).
\end{align}

Finally, the fourth term in \eqref{3UB1} satisfies
\begin{align}
\label{3UBT4}
I(m_3;\tilde{Z}^n|Y_1^n,m_1,m_2)\leq n\left(o(\log(P))\right).
\end{align}
Thus, by plugging \eqref{3UBT1}, \eqref{3UBT2}, \eqref{3UBT3}, and \eqref{3UBT4} in \eqref{3UB1}, and letting $n\to\infty$, we obtain $R_1+R_2+R_3\leq\log(P)+o(\log(P))$ and as a result, the degrees of freedom of the 3-user IC-CR is upper bounded by $d_1+d_2+d_3\leq2$.

\section{Proof of Theorem \ref{3UserSpecialCaseTheorem}}
\label{3UserSpecialCaseProof}
If $h_{32}h_{r1}=h_{r2}h_{31},$ then the upper bound in Appendix \ref{3UserICCRProof} given by  $d_1+d_2+d_3\leq2$ does not hold since $\beta=0$. It can be similarly shown that, by giving similar side information as in Appendix \ref{3UserICCRProof} to receivers 1 and 3, and 1 and 2, the conditions ${h_{23}}{h_{r1}}={h_{r3}}{h_{21}}$, and ${h_{13}}{h_{r2}}={h_{r3}}{h_{12}}$, are required so that the DoF does not collapse to 2. Now, as long as
\begin{align*}
\frac{h_{11}}{h_{12}}\neq\frac{h_{r1}}{h_{r2}},\quad \frac{h_{22}}{h_{21}}\neq\frac{h_{r2}}{h_{r1}},\quad \frac{h_{33}}{h_{31}}\neq\frac{h_{r3}}{h_{r1}},
\end{align*}
the relay can cancel the interference at all receivers simultaneously, and thus 3 DoF are achievable.

\end{appendices}

\bibliography{myBib}

\end{document}